\theoremstyle{plain}
\newtheorem{theorem}{Theorem}
\newtheorem{lemma}{Lemma}
\newcommand{\blind}{0}
\begin{document}

\if0\blind{

\makeatletter
\def\blfootnote{\gdef\@thefnmark{}\@footnotetext}
\makeatother

\title{Direct semi-parametric estimation of the state price density implied in option prices}

\author{Gianluca Frasso$\dag$ $^{\ast}$ and Paul H.C. Eilers ${\ddag}$}
\blfootnote{{$^\ast$ Corresponding author. Email: gianluca.frasso@features-analytics.com} \\
$\dag$ Facult\'{e} des Sciences Humaines et Sociales, M\'{e}thodes Quantitatives en Sciences Sociales, Universit\'{e} de Li\`{e}ge, Belgium \&
Features-Analytics S.A., Nivelles, Belgium.
Email: gianluca.frasso@features-analytics.com \\
$\ddag$Erasmus University Medical Center, Rotterdam, The Netherlands. Email: p.eilers@erasmusmc.nl}
\maketitle
} \fi

\if1\blind
{
  \bigskip
  \bigskip
  \bigskip
  \begin{center}
    {\LARGE\bf Direct semi-parametric estimation of the state price density implied in option prices}
\end{center}
  \medskip
} \fi

\begin{abstract}
We present a model for direct semi-parametric estimation of the state price density (SPD) implied by quoted option prices. We treat the observed prices as expected values of possible pay-offs at maturity, weighted by the unknown probability density function. We model the logarithm of the latter as a smooth function, using P-splines, while matching the expected values of the potential pay-offs with the observed prices. This leads to a special case of the penalized composite link model. Our estimates do not rely on any parametric assumption on the underlying asset price dynamics and are consistent with no-arbitrage conditions. The model shows excellent performance in simulations and in applications to real data.
\\ 
\\
\textbf{Keywords:} Arbitrage-free estimates, Inverse Problem; Penalized composite link model; State Price Density
\end{abstract}
\section{Introduction}
\label{introduction}
Under equilibrium conditions, the value of an option contract is equal to the discounted expected value of its future net returns. The expectation is taken with respect to the state price density (SPD), also referred to as the risk-neutral density \citep{CoxandRoss1976} or equivalent martingale measure \citep{HarrisonandKreps1979}. This probability measure cannot be observed directly. However, under restrictive assumptions on the underlying asset price dynamics, its functional form is known.  This is the case within the log-normal framework proposed by \citet[][]{Black1973}. Unfortunately, log-normality is rarely appropriate \citep[see e.g.][]{Bates2000}, and more reliable pricing approaches are preferable in many applications. 

The risk-neutral density is proportional to the second derivative of the option price with respect to the strike price \citep{Breeden1978}. Option contracts are not marketed for a continuous set of strikes and observed prices can be noisy and biased. This makes a simple numerical differentiation of the observed quotes useless for inferring the underlying distribution. Parametric, semi-parametric and non-parametric models have been proposed to solve this estimation task \citep[see for example][for an extensive presentation]{Jondeau2007}.

Parametric schemes assume an analytic form of the SPD. \citet{Ritchey1990} models the risk-neutral distribution as a mixture of normal densities while \cite*{Bahra1996} and \citet{Melick1997} adopt a mixture of log-normal densities. 

Semi-parametric methods aim to approximate in a flexible way departures from a parametric density. \citet*{Jarrow1989} adopt an approach based on the Edgeworth expansion of a log-normal risk-neutral density. Within a similar framework, \citet{Abken1996} and \citet*{Madan1994} present a (Hermite) polynomial approximation, while \citet{Jondeau2001} suggest constrained Gram-Charlier expansions.

Non-parametric strategies do not formulate any hypothesis about the underlying asset dynamics. Several non-parametric proposals focus on indirect estimation of the latent distribution function by smoothing the observed prices and successively approximating its second derivative. \citet{Sahalia1998, Sahalia2000} and \citet{Huynh2002} model the observed option quotes by kernel smoothing. Within the same framework, \citet{Sahalia2003} recommend a two-stage procedure: in a first step the data are pre-processed using isotonic regression, and the fitted option values are smoothed via kernel techniques. \citet{Song2016} 
estimate the SPDs implied by S\&P500 and VIX options with a two stage kernel smoothing method 
and use them to infer the joint parametric pricing kernel. \citet{Campa1998}, \citet{Bliss2002} and \citet{Jackwerth2000} use methods that are based on the regularization of the implied volatility curves. \citet{Jackwerth1996} model the squared differences of the estimated pricing function with the observed one and maximize the smoothness of the underlying density. 

Other non-parametric proposals involve polynomial basis expansions. \citet{Shimko1993} approximates the implied volatility surface using quadratic polynomials and derives continuous option pay-offs from the smoothed volatility smiles, while \cite{Fengler2009} defines a class of constrained natural spline smoothers for the implied volatility curves. \citet{Rosenberg1998} uses sigma-shaped polynomials. \citet[][]{Yatchew2006} propose smoothing splines regularized by a penalty term forcing a no-arbitrage fit. \citet*{Hardle2009} obtain similar results with (suitably re-parameterized) constrained non-linear regression of the option prices, an approach that does not ensure smooth approximations. \citet{Bondarenko2003} proposes a positive density convolution (PCA) estimator. 

We introduce a direct and flexible framework enabling smooth estimation of the state price density implied in option contracts (DESPD: direct estimation of the state price density). Our strategy leads to a penalized composite link model \citep{Eilers2007} and complies with arbitrage-free requirements. We treat the observed prices as expected values of possible pay-offs at maturity, weighted by the latent (unknown) density. We model the logarithm of the latter as a smooth function, using P-splines, while matching the expected value of the possible contract pay-offs with the observed ones. The option prices and the SPD are estimated as functions of penalized regression coefficients. Smoothness is induced by a discrete roughness penalty on the spline coefficients, \citep[in analogy with][]{Eilers1996} allowing for efficient interpolation and extrapolation at  unobserved support points \citep[see e.g.][]{Eilers2010}.

This paper is organized as follows. In Section~\ref{sec_settings_estimation} we introduce the general setting of our approach and describe the estimation procedure. In Section~\ref{sec_noarbitrage} we prove the no-arbitrage properties of the DESPD estimator. A large simulation study is presented in Section~\ref{sec_simulations}. In section~\ref{sec_application} we illustrate the results obtained analyzing over one year of weekly option contracts written on the S\&P500 index. Concluding remarks and future research paths are discussed in Section~\ref{sec_discussion}. 

\section{DESPD model specification and estimation procedure}
\label{sec_settings_estimation}
Quoting \citet*{Cox1979}, an option is a contract giving the right (not the obligation) to buy (call type) or sell (put type) a risky asset with price $s$ at a predetermined (fixed) strike price $k$ before (American style) or at (European style) a given date (maturity of the contract, $T$). Call and put options written on a given underlying are usually quoted together for a set of strike prices over different maturities (so called \textit{option chains}). Here we consider only European-style (exercise only possible at maturity) contracts. 

The current price (at time $t$) of an option (say a call) should take into account the uncertainty about its pay-off $(s_{T} - k)$ at the expiration date (apart from the cost of money and transaction costs). Its fair value is then equal to the discounted expected returns at maturity ($T$):
\begin{equation}\label{eq:integral_price}
c_{t} = \exp(-r_{t, \tau} \tau)\int_{0}^{\infty}(s_{T} - k)^{+} f_{t}(s_{T}) 
\mbox{d}s_{T},
\end{equation}
where $r_{t, \tau}$ is the risk-free interest rate (e.g. the Libor rate), $\tau = T - t$ is the time to maturity, $s_{T}$ (price of the underlying at maturity) is the state variable and $f_{t}(s_{T})$ is the state price density. The SPD is unknown and must be inferred from the observed prices. 

Suppose we observe a set of (call) option prices $c_{1}, \dots{}, c_{n}$ with ordered strikes   $k_{1}< k_{2}< \dots < k_{n}$ (what follows can be generalized to put options, as discussed in Section~\ref{sec_parity}). Define a set of possible values for the underlying asset at maturity $\boldsymbol{u} = \{u_{1},\dots,u_{m}\}$ with $u_{1} < u_{2} < \dots < u_{m}$ and $m$ fixed a priori. We set the elements of $\boldsymbol{u}$ as equally spaced (in Section~\ref{subsec:gridvalue} we discuss this choice and present an alternative definition) and require that $k_{1} > u_{1} $ and $k_{n} < u_{m}$. In our experience $u_{1} = \max(0, k_{1} - 0.1 \times k_{1}) $ and $u_{m} = k_{n} + 0.1 \times k_{n}$ are appropriate choices in most applications. Let also $\varphi_{j}$ be the $j$th element of the $m-$dimensional vector $\boldsymbol{\varphi}$ such that $\varphi_{j}= f_{t}(u_{j}) = \displaystyle \frac{\exp({\eta}_{j})}{1 + \sum_{k = 2}^{m} \exp({\eta}_{k})}$, where $\eta_{1} = 0$ to ensure identifiability \citep[see e.g.][]{Lipovetsky2009}. To keep our notation simple, we also assume that $\exp(-r_{t, \tau} \tau) = 1$ (or that the observations have been scaled by the known discount factor). We can then model the observed prices as
\begin{equation}\label{eq:nonlinear_regression}
\displaystyle c_{i} = \displaystyle  \mu_{i} + \epsilon_{i} = \displaystyle  \sum_{j = 1}^{m} g_{i j} \varphi_{j} + \epsilon_{i},
\end{equation}
where $i = 1,...,n$, $\epsilon_{i}$ are i.i.d. random variables with zero mean and constant variance $\sigma^2$, $g_{ij}=(u_{j} - k_{i})^{+}$ (for $i = 1,\dots, n$, and $j = 1,\dots,m$) are the entries of a $(n \times m)$ matrix $\boldsymbol{G}$:
\begin{equation*}
	\boldsymbol{G} =
	\left[
	\begin{array}{c c c c c c}
	(u_{1} - k_{1})^{+} & (u_{2} - k_{1})^{+} & \cdots & (u_{j} - k_{1})^{+} & \cdots  & (u_{m} - k_{1})^{+}\\
	\vdots              & \vdots              &\vdots  &   \vdots                  & \vdots  & \vdots             \\
	(u_{1} - k_{i})^{+} & (u_{2} - k_{i})^{+} & \cdots & (u_{j} - k_{i})^{+} & \cdots  & (u_{m} - k_{i})^{+}\\
	\vdots              & \vdots              &\vdots  &    \vdots                 & \vdots  & \vdots             \\
	(u_{1} - k_{n})^{+} & (u_{2} - k_{n})^{+} & \cdots & (u_{j} - k_{n})^{+} & \cdots  & (u_{m} - k_{n})^{+}
	\end{array}
	\right],
\end{equation*}
Here $(u_j - k_i)^+ = (u_j-k_i) \mathtt{I}(u_j > k_i)$. Eq.~\ref{eq:nonlinear_regression} leads to the following optimization problem
\begin{equation}\label{eq:ls_nopen}
	\min_{\boldsymbol{\eta}} {S}(\boldsymbol{\eta}) = \|\boldsymbol{c} - \boldsymbol
	{G} \boldsymbol{\varphi} \|^{2}.
\end{equation}
This is a severely ill-conditioned non-linear problem. We regularize it by assuming smoothness of the unknown probability density function and penalizing the differences of the adjacent $\boldsymbol{\eta}$ coefficients. We can then write the penalized non-linear problem as follows \citep[see e.g.][]{Eilers2007} 
\begin{equation}\label{eq:pen_LS}
\min_{\boldsymbol{\eta}} S(\boldsymbol{\eta}) = \|\boldsymbol{c} - \boldsymbol{G} \boldsymbol{\varphi} \|^{2} + \lambda \|\boldsymbol{D} \boldsymbol{\eta}\|^{2},
\end{equation}
where $\boldsymbol{D}$ is a difference operator. Second or third order differences are common choices for the penalty matrix. In our analyses we will use a third order $\boldsymbol{D}$-matrix, promoting unimodal density estimates (see e.g. \citealp*{Eilers2005}). 

The estimator in Eq.~\ref{eq:pen_LS} differs from the one proposed by \citet{Jackwerth1996} since we model the logarithm of the latent distribution as a smooth function. Our parameterization ensures proper density estimates with no need for an explicit constraint and guarantees estimates consistent with no-arbitrage requirements (see Section~\ref{sec_noarbitrage}). It is also different from the one of \citet{Hardle2009}, because smooth density estimates are naturally enforced by the roughness penalty term. The smoothness of $\boldsymbol{\varphi}$ is regulated by $\lambda$ which can be selected by means of an optimality criterion (see Section~\ref{subsec_penalty_sel}) or specified by the user.

\begin{lemma}
For a fixed $\lambda > 0$, the functions 
$L(\boldsymbol{\eta}) = \|\boldsymbol{c} - \boldsymbol{G}\boldsymbol{\varphi} \|^{2}$ and 
$Q(\boldsymbol{\eta}) = \lambda \|\boldsymbol{D} \boldsymbol{\eta}\|^{2}$ are differentiable at every point ${\eta} \in \mathds{R}^{m}$. They are twice differentiable except when $\boldsymbol{D} \boldsymbol{\eta} = 0$ or $c_{i} \equiv  \displaystyle \sum_{j = 1}^{m} g_{i j} \varphi(\eta_{j}) $ for some $i$. 
\end{lemma}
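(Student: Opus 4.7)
The approach is a direct application of the chain rule, handling first- and second-order smoothness separately.

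For first-order differentiability at every $\eta \in \Re^m$, I would note that $\eta \mapsto \varphi(\eta)$ with $\varphi_j = e^{\eta_j}$ is $C^\infty$ componentwise, so $\eta \mapsto c - G\varphi(\eta)$ is $C^\infty$ from $\Re^m$ into $\Re^n$. Composing with the polynomial map $v \mapsto v^T v$ yields differentiability of $L$ everywhere, with
\[
\nabla L(\eta) = -2\,\mathrm{diag}\bigl(\varphi(\eta)\bigr)\,G^T\bigl(c - G\varphi(\eta)\bigr).
\]
Similarly, $Q(\eta) = \lambda\,\eta^T D^T D\,\eta$ is a quadratic form, hence $\nabla Q(\eta) = 2\lambda\,D^T D\,\eta$ is defined at every point. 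This settles the first claim.

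For twice-differentiability outside the exceptional loci, I would differentiate the gradients once more to obtain
\[
\nabla^2 L(\eta) = 2\,\mathrm{diag}(\varphi)\,G^T G\,\mathrm{diag}(\varphi) \; - \; 2\,\mathrm{diag}\!\bigl(\varphi \odot G^T(c - G\varphi)\bigr),
\qquad
\nabla^2 Q(\eta) = 2\lambda\,D^T D.
\]
Both Hessian expressions are continuous in $\eta$, and their existence on any open set avoiding $\{D\eta = 0\}$ and $\{\eta : c_i = \sum_j g_{ij} e^{\eta_j}\text{ for some }i\}$ delivers the second claim. The bookkeeping to watch here is the product rule applied to the residual-weighted term in $\nabla^2 L$, which couples $\mathrm{diag}(\varphi)$ with $G^T(c - G\varphi)$ and produces the two summands above.

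The main obstacle I foresee is interpretational rather than purely technical: a literal chain-rule argument actually shows $L$ and $Q$ are $C^\infty$ on all of $\Re^m$, so the two loci in the statement are not classical failure sets for twice differentiability. I would read the lemma as isolating the configurations where $\nabla^2 L$ collapses onto its Gauss--Newton part (on the residual-zero locus) and where $\nabla Q$ vanishes (on $\{D\eta = 0\}$), since these are the boundary points at which the penalized IRLS scheme of Section~\ref{sec_settings_estimation} changes regime. Making this interpretation explicit---and thereby reconciling the literal statement with the stronger smoothness the chain rule delivers---is the main item to settle before writing out the short derivative calculations above.
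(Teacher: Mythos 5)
Your derivative computations match the paper's own proof exactly: the paper's $\breve{\boldsymbol{e}} = \left|\boldsymbol{G}\boldsymbol{\varphi} - \boldsymbol{c}\right|\mathrm{sgn}(\boldsymbol{G}\boldsymbol{\varphi} - \boldsymbol{c})$ is just the residual $\boldsymbol{G}\boldsymbol{\varphi} - \boldsymbol{c}$, so its gradient $2\boldsymbol{F}^{\top}\boldsymbol{G}^{\top}\breve{\boldsymbol{e}}$ and Hessian $2\,\mbox{diag}\{\boldsymbol{F}^{\top}\boldsymbol{G}^{\top}\breve{\boldsymbol{e}}\} + 2\boldsymbol{F}^{\top}\boldsymbol{G}^{\top}\boldsymbol{G}\boldsymbol{F}$ coincide with yours, and the paper's proof consists of nothing more than writing these formulas down. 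Your further observation is also correct and is, if anything, more careful than the original: both $L$ and $Q$ are $C^{\infty}$ on all of $\Re^{m}$, the two stated exceptional loci are not genuine failure sets for twice differentiability, and the paper offers no argument for them either, so the lemma's caveat should indeed be read as flagging degenerate configurations of the Hessian rather than actual loss of smoothness.
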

\begin{proof}
Let $\boldsymbol{F}$ be the $(m \times m)$ matrix with entries $\displaystyle \frac{\mbox{d}\varphi_{j}}{\mbox{d}\eta_{k}} =\varphi_{k} \left(\delta_{jk} - \varphi_{j}\right)$ where $\delta_{jk} = 1$ if $j = k$ and zero otherwise. Define also $\boldsymbol{e} = \left(\boldsymbol{G} \boldsymbol{\varphi} - \boldsymbol{c} \right)$ and indicate with $\mathds{I}$ a $(m \times m)$ identity matrix and with $\mathds{1}$ a $(m \times 1)$ vector of ones. Then we have
\begin{equation}\label{eq:gradHessian}
\begin{array}{llll}
L^{\prime}(\boldsymbol{\eta}) &=& \displaystyle 2 \boldsymbol{e}^{\top} \boldsymbol{G}  \boldsymbol{F}^{\top}\\
L^{\prime\prime}\left(\boldsymbol{\eta}\right) &=& \displaystyle 2 \left\{ \mbox{diag}\left( (\mathds{I} - \boldsymbol{\varphi}^{\top} \mathds{1}) \boldsymbol{G}^{\top}\boldsymbol{e}\right) +
\mbox{diag}(\boldsymbol{\varphi}) \mathds{1} \boldsymbol{e}^{\top} \boldsymbol{G} + 
\boldsymbol{F}^{\top} \boldsymbol{G}^{\top}\boldsymbol{G}\right\}\boldsymbol{F} \\
Q^{\prime}(\boldsymbol{\eta}) &=& \displaystyle  2\lambda \boldsymbol{D}^{\top} \boldsymbol{D} \boldsymbol{\eta},\\
Q^{\prime\prime}(\boldsymbol{\eta}) &=& \displaystyle 2  \lambda  \boldsymbol{D}^{\top} \boldsymbol{D}.
\end{array}
\end{equation}
The first and second elements of the Hessian $L^{\prime\prime}\left(\boldsymbol{\eta}\right)$ contain scalar products of the predictor matrix by the residual vector which are nearly orthogonal. For this reason we can use the approximation  
\begin{equation}\label{eq:approxHessian}
L^{\prime\prime}\left(\boldsymbol{\eta}\right) = 2 \boldsymbol{F}^{\top} \boldsymbol{G}^{\top}\boldsymbol{G} \boldsymbol{F},
\end{equation}
which also makes the Hessian matrix more robust for inversion \citep[see e.g.][]{Lipovetsky2009}.

\end{proof}
\begin{theorem}
$L\left(\boldsymbol{\eta}\right) + Q(\boldsymbol{\eta})$ is convex in $\mathds{R}^{m}$ and, for a given $\lambda > 0$, Eq.~\ref{eq:pen_LS} has a unique solution.
\end{theorem}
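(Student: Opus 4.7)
The plan is to prove convexity by examining the Hessian $H(\boldsymbol{\eta}) := L''(\boldsymbol{\eta}) + Q''(\boldsymbol{\eta})$ supplied by the preceding lemma, and then to upgrade this to strict convexity in order to secure uniqueness.

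First, I would combine the formulae from the lemma and write
$$H(\boldsymbol{\eta}) = 2(\boldsymbol{G}\boldsymbol{F})^{\top}(\boldsymbol{G}\boldsymbol{F}) + 2\,\mathrm{diag}\bigl\{\boldsymbol{F}^{\top}\boldsymbol{G}^{\top}\breve{\boldsymbol{e}}\bigr\} + 2\lambda \boldsymbol{D}^{\top}\boldsymbol{D}.$$
Two of the three blocks are manifestly positive semi-definite: the Gram piece $(\boldsymbol{G}\boldsymbol{F})^{\top}(\boldsymbol{G}\boldsymbol{F})$ and the penalty block $\lambda\boldsymbol{D}^{\top}\boldsymbol{D}$ with $\lambda > 0$. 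The delicate term is the middle, residual-dependent diagonal, whose entries can be of either sign because $\breve{\boldsymbol{e}} = \boldsymbol{G}\boldsymbol{\varphi} - \boldsymbol{c}$ may have negative components.

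Second, to handle this troublesome middle term I would attempt a completing-the-squares argument that pairs it with the Gram piece. The idea is that the diagonal term $2\,\mathrm{diag}\{\boldsymbol{F}\boldsymbol{G}^{\top}\breve{\boldsymbol{e}}\}$ is a rescaling of $\boldsymbol{F}$ by the coordinates of the residual vector that, multiplied on both sides by $\boldsymbol{F}^{1/2}$-type factors, can be absorbed into $2(\boldsymbol{G}\boldsymbol{F})^{\top}(\boldsymbol{G}\boldsymbol{F})$ up to a term controlled by $\|\breve{\boldsymbol{e}}\|$; whatever remains should then be dominated by $\lambda\boldsymbol{D}^{\top}\boldsymbol{D}$. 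The argument must exploit the non-negativity of the pay-off matrix $\boldsymbol{G}$ and of the observed prices $\boldsymbol{c}$, together with the strict positivity of the diagonal of $\boldsymbol{F}$. This is the step I expect to be the main obstacle, because in regions where some $\varphi_j$ are very small the residual term can grow large relative to the Gram piece, and obtaining an inequality valid at \emph{every} $\boldsymbol{\eta} \in \mathbb{R}^m$, not only near a minimizer, is not automatic from the shape of $H$.

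Once $H(\boldsymbol{\eta}) \succeq 0$ has been established everywhere, convexity of $L+Q$ follows from the standard second-order characterization. For uniqueness I would strengthen this to strict convexity by arguing that $H(\boldsymbol{\eta})$ is in fact positive definite whenever $\ker(\boldsymbol{G}\boldsymbol{F}) \cap \ker(\boldsymbol{D}) = \{0\}$: because $\boldsymbol{F}$ is a strictly positive diagonal matrix, $\ker(\boldsymbol{G}\boldsymbol{F}) = \ker(\boldsymbol{G})$, and the difference operator $\boldsymbol{D}$ annihilates only low-order polynomial sequences, which $\boldsymbol{G}$ separates once the grid $\boldsymbol{u}$ is sufficiently rich with respect to the strikes $\boldsymbol{k}$. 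Combining strict convexity with the coercivity of $L+Q$ — in directions in which $\boldsymbol{D}\boldsymbol{\eta} \neq 0$ the penalty blows up, while in the null space of $\boldsymbol{D}$ the vector $\boldsymbol{\varphi}$ has at least one component tending to $+\infty$, forcing $\|\boldsymbol{G}\boldsymbol{\varphi} - \boldsymbol{c}\|^{2} \to \infty$ by the non-negativity of $\boldsymbol{G}$ — then yields existence and uniqueness of the minimizer of Eq.~\ref{eq:pen_LS}.
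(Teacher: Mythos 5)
Your plan stands or falls with the claim that $H(\boldsymbol{\eta})=L''(\boldsymbol{\eta})+Q''(\boldsymbol{\eta})\succeq 0$ at \emph{every} $\boldsymbol{\eta}\in\Re^{m}$, to be obtained by absorbing the indefinite diagonal term $2\,\mbox{diag}\{\boldsymbol{F}^{\top}\boldsymbol{G}^{\top}\breve{\boldsymbol{e}}\}$ into the Gram block and letting $\lambda\boldsymbol{D}^{\top}\boldsymbol{D}$ dominate what is left. That step --- which you yourself flag as the main obstacle --- genuinely fails and cannot be repaired at fixed $\lambda$: $\boldsymbol{D}^{\top}\boldsymbol{D}$ is singular (its kernel contains all polynomial sequences of degree below the order of the differences, in particular the constant vector $\mathbf{1}$ for the third-order penalty used here), so there are directions along which the penalty contributes nothing while the residual-dependent diagonal is unboundedly negative. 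Concretely, take $\boldsymbol{\eta}=t\mathbf{1}$, so $\boldsymbol{\varphi}=e^{t}\mathbf{1}$, and test with $v=\mathbf{1}$; since $\boldsymbol{D}\mathbf{1}=0$,
\[
v^{\top}H(\boldsymbol{\eta})v \;=\; 4\|\boldsymbol{G}\boldsymbol{\varphi}\|^{2}-2(\boldsymbol{G}\boldsymbol{\varphi})^{\top}\boldsymbol{c} \;=\; 4e^{2t}\|\boldsymbol{G}\mathbf{1}\|^{2}-2e^{t}(\boldsymbol{G}\mathbf{1})^{\top}\boldsymbol{c},
\]
which is negative for $t$ sufficiently negative whenever the observed prices are positive. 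Hence the Hessian is not positive semidefinite everywhere, $L+Q$ is not convex on all of $\Re^{m}$, and no completing-the-squares or domination argument can deliver the global second-order condition you are after; at best one gets convexity on the region where $2\,\mbox{diag}\{\boldsymbol{F}^{\top}\boldsymbol{G}^{\top}\breve{\boldsymbol{e}}\}+2\boldsymbol{F}^{\top}\boldsymbol{G}^{\top}\boldsymbol{G}\boldsymbol{F}\succeq 0$, e.g.\ where the residuals are small relative to the fitted prices. Note that the paper's own proof does not attempt your absorption: it simply reads the Hessians off Lemma 2.1 and asserts their nonnegative definiteness, so your instinct that the middle term is the delicate point is sound --- but your sketch does not, and cannot, close it in the stated generality.

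Your existence argument has a second gap. The coercivity claim ``in the null space of $\boldsymbol{D}$ at least one component of $\boldsymbol{\varphi}$ tends to $+\infty$'' is false for directions in that null space with nonpositive sign pattern: along $\boldsymbol{\eta}=-t\mathbf{1}$, $t\to\infty$, every $\varphi_{j}\to 0$, the penalty is identically zero, and $L\to\|\boldsymbol{c}\|^{2}$, so $L+Q$ is not coercive on $\Re^{m}$ and existence of a minimizer cannot be obtained this way. By contrast, your observation that positive definiteness on the convex region reduces to $\ker(\boldsymbol{G})\cap\ker(\boldsymbol{D})=\{0\}$ (using that $\boldsymbol{F}$ is a strictly positive diagonal) is a useful refinement that the paper does not spell out; but as written the proposal neither establishes global convexity nor existence, so it does not prove the theorem.
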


\begin{proof}
As noticed above, $L^{\prime\prime}(\boldsymbol{\eta})$ can be approximated by Eq.~\ref{eq:approxHessian} which is nonnegative definite if it exists. $Q^{\prime\prime}(\boldsymbol{\eta})$ is also nonnegative definite. It follows that $L^{\prime\prime}(\boldsymbol{\eta}) + Q^{\prime\prime}(\boldsymbol{\eta})$ is nonnegative definite and strictly convex \citep[see e.g.][]{Fleming1987}. Therefore, Eq.~\ref{eq:pen_LS} admits a unique solution for $\lambda > 0$.
\end{proof}

\subsection{Estimation procedure} \label{subsec:estimation_process}
The non-linear problem in Eq.~\ref{eq:pen_LS} can be solved using a scoring procedure adapting the algorithms presented in \citet{Thompson1981} and \citet{Green1984}. Indicate with $\tilde{\boldsymbol \mu}$ an approximation of the mean function (in what follows the tilde symbol will always indicate an approximation). Then, a first order expansion gives
\begin{equation*}\label{eq:mean_func} 
	\mu_{i} \approx \tilde{\mu_{i}} + \sum_{j} \frac{\partial
	\tilde{\mu}_{i}}{\partial
	\eta_{j}} \Delta \eta_{j} = \tilde{\mu_{i}} + \sum_{j} g_{ij} 
	\tilde{\varphi}_{j} \Delta \eta_{j}.
\end{equation*}
By combining this result with Eq~\ref{eq:pen_LS}, we can derive the linearized least squares criterion
\begin{equation}\label{eq:lin_crit}
\min_{\boldsymbol{\eta}} \tilde{S}(\boldsymbol{\eta}) = \left\|\boldsymbol{c} - \tilde{\boldsymbol{\mu}} - \boldsymbol{G}\tilde{\boldsymbol{F}}(\boldsymbol{\eta} - \tilde{\boldsymbol \eta})\right\|^{2} + \lambda\left\|\boldsymbol{D} \boldsymbol{\eta}\right\|^{2},
\end{equation}
where $\tilde{F}_{jk} = \tilde{\varphi}_{k} (\delta_{jk} - \tilde{\varphi}_{j})$. The coefficients $\boldsymbol{\eta}$ can then be estimated through (penalized) iterative weighted least squares, by repeatedly solving the following set of normal equations  
\begin{equation}\label{eq:IRLS}
\left(\tilde{\boldsymbol{E}}^{\top} \tilde{\boldsymbol{E}} + \lambda
\boldsymbol{D}^{\top}\boldsymbol{D}
\right) \boldsymbol{\eta} = \tilde{\boldsymbol{E}}^{\top}
\left(\boldsymbol{c} -\tilde{\boldsymbol{\mu}} + \tilde{\boldsymbol{E}}
\tilde{\boldsymbol{\eta}} \right),
\end{equation}
with $\tilde{\boldsymbol{E}} = \boldsymbol{G} \tilde{\boldsymbol{F}}$ and $\eta_{1} = 0$ ensures identifiability. Convergence is usually achieved with a modest number of iterations (less than 30 for a relative tolerance of $10^{-5}$). 

Finally, for a given value of $\lambda$, using Eq.~\ref{eq:approxHessian}, the variance-covariance matrix of $\hat{\boldsymbol{\eta}}$ can be computed as 
\begin{equation}\label{eq:Var-Cov}
\mbox{Var}(\hat{\boldsymbol{\eta}}) = \hat{\sigma}^{2} \left( \tilde{
\boldsymbol{E}}^{\top} \tilde{\boldsymbol{E}}  + \lambda \boldsymbol{D}^{\top} \boldsymbol{D}  \right)^{-1},
\end{equation}
which can be modified to accommodate optional weights (see e.g. in Section~\ref{subsec:heteroscedasticity}). The variance of $\hat{\boldsymbol{\varphi}}$ is then obtained using the delta method. 

\subsection{Heteroscedasticity} \label{subsec:heteroscedasticity}
By introducing a suitable matrix of weights $\boldsymbol{W}$, Eq.~\ref{eq:IRLS} can be modified to accommodate heteroscedastic residuals
\begin{equation*}\label{eq:wtg_norm_eq}
\left(\tilde{\boldsymbol{E}}^{\top} {\boldsymbol{W}} \tilde{\boldsymbol{E}} + 
\lambda
\boldsymbol{D}^{\top}\boldsymbol{D}
\right) \boldsymbol{\eta} = \tilde{\boldsymbol{E}}^{\top} {\boldsymbol{W}} 
\left(\boldsymbol{c} -
\tilde{\boldsymbol{\mu}} + \tilde{\boldsymbol{E}}
\tilde{\boldsymbol{\eta}} \right).
\end{equation*}
For example, since most of the variability is expected for small values of the pay-off, one could set weights equal to the inverse of the prices/strike ratio: ${\boldsymbol{W}} = \mbox{diag}\left(\tilde{\boldsymbol{\mu}}/\boldsymbol {k} \right)^{-1}$. 
Similarly, it is possible to correct for residual autocorrelation if necessary. For instance, an AR1$(\rho)$ correlation structure can be modeled taking $W_{i,j} = \rho^{|i - j|}$ (where the auto-correlation parameter is estimated using standard GLS techniques).

\subsection{Including put options} \label{sec_parity}
Put options $\boldsymbol{p}(\boldsymbol{k}, T)$ are often traded together with the homologous calls. The \textit{put-call parity} links the two contracts 
\begin{equation}\label{eq:put_call_parity}
\boldsymbol{c}(\boldsymbol{k}, T) = \boldsymbol{p}(\boldsymbol{k}, T) + s - 
\boldsymbol{k} \exp(-r \tau).
\end{equation}
Eq.~\ref{eq:put_call_parity} allows to compute the equilibrium put prices once the value of the related calls have been estimated for different strikes (or vice versa).

We propose a different strategy. The model matrix connects the observed prices to the estimated density. By definition, the state price density is unique while $\boldsymbol{G}$ can be generalized to model different classes of contracts because its entries depend on the pay-off function only. We can then introduce put prices as additional observations by augmenting the model matrix with row vectors $g^{*}_{ij} = (k_{i} - u_{j})^{+}$ (the put pay-off) and including the quoted put prices in the vector of observations. The estimation procedure remains the same. 

\subsection{Penalty parameter selection} \label{subsec_penalty_sel}
The parameter $\lambda$ can be selected by means of a suitable optimality criterion. Well-known possibilities are (generalized) cross validation, Akaike's information criterion and the Bayesian information criterion. Figure~\ref{fig_AIC_profile} shows an example based on an AIC minimization, exploring a fine grid of $\log_{10}\lambda$.

However, we propose to adopt mixed model theory \citep[see e.g.][]{lee2006, Ruppert2003, Wood2017b}. Then $\lambda = \sigma^2/\sigma_r^2$, where $\sigma^2$ is  the variance of the observation noise and $\sigma_r^2$ the variance of the random effects, in the present case $\boldsymbol{D} \boldsymbol{\eta}$. To estimate these variances, an iterative EM procedure is effective \citep[][]{Schall1991}. It uses the updating equations  
\[
\displaystyle \hat{\sigma}^{2} = \displaystyle \frac{\displaystyle \sum_{i=1}^{n} w_{i} \left(c_{i} - \sum_{j=1}^{m} g_{ij} \hat{{\varphi}}_{j}\right)^{2}}{n - \mbox{ED}},\
\displaystyle \hat\sigma_r^{2} = \frac{\displaystyle \left\|\boldsymbol{D} \hat{\boldsymbol{\eta}}\right\|^2}{\mbox{ED} - d },
\]
where ED is the effective model dimension and $d$ the order of the differences in the penalty. Following \cite{Hastie1990}, ED is defined as the trace of the hat matrix $\boldsymbol{H}$, with
$$  
\boldsymbol{H} = 
\tilde{\boldsymbol{E}}( \tilde{\boldsymbol{E}}^{\top} \boldsymbol{W}
\tilde{\boldsymbol{E}} + \lambda \boldsymbol{D}^{\top} \boldsymbol{D}  )^{-1} 
\tilde{\boldsymbol{E}}^{\top}\boldsymbol{W}.
$$ 
Convergence is achieved in a limited number of iterations (usually less than 15). Under the normality assumption, a strict relationship connects AIC optimization and the mixed model approach, since the $\lambda$ selected by the latter method can be shown to be equal to the expected value of the one suggested by the AIC criterion \citep[see][]{Krivobokova2007}. 

\begin{figure}
\centering
\includegraphics[width = 1\linewidth]{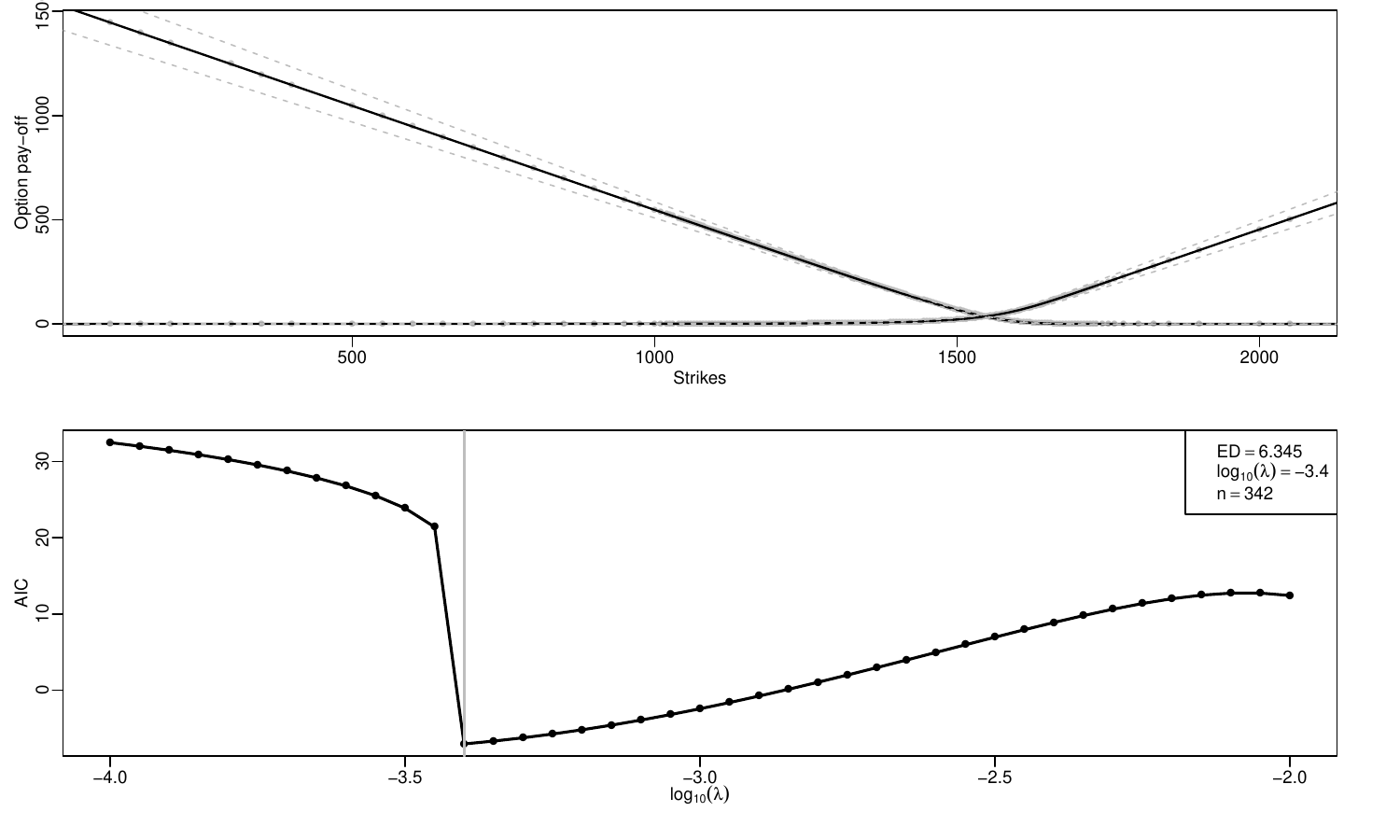}
\caption{Upper panel: observed (171) put and (171) call option prices (gray dots), smooth pay-offs (solid black lines) and point-wise confidence bounds 
(gray dashed lines). Lower panel: AIC values for different values 
of $\log_{10}(\lambda)$. The vertical line indicates the minimum 
of the selection criterion (also reported in the legend with the associated effective dimension). The data are available in the R-package \texttt{RND} (S\&P 500 options with 62 days to maturity quoted on April 19, 2013).}
\label{fig_AIC_profile}
\end{figure}

\subsection{Definition of the $\boldsymbol{u}$-set}
\label{subsec:gridvalue}
We only require for the set of possible underlying prices at maturity (vector $\boldsymbol{u}$ in the model matrix definition) to satisfy $u_{1} < k_{1}$ and $u_{m} > k_{n}$ (a good choice in most applications is $u_{1} = \max(0, k_{1} - 0.1 \times k_{1})$ and $u_{m} = k_{n} + 0.1 \times k_{n}$). Defining this vector is equivalent to choosing the numerical integration scheme for Eq.~\ref{eq:integral_price}. We have tested two strategies: 1) uniform and equally spaced set of $u$ values (as in Section~\ref{sec_settings_estimation}) equivalent to a simple rectangular integral approximation, 2) a Gaussian quadrature scheme. This second strategy can be easily accommodated in what we presented in Section~\ref{sec_settings_estimation}. The penalized least squares problem becomes
\begin{equation}\label{eq:pen_LS_quadrature}
\min_{\boldsymbol{\eta}} S(\boldsymbol{\eta}) = \sum^{n}_{i=1} \left(c_{i} - \sum^{m}_{j=1} g_{ij} w_{j} \varphi_{j}(\boldsymbol{\eta}) \right)^{2} + \lambda \|\boldsymbol{D} \boldsymbol{\eta}\|^{2},
\end{equation}
where $g_{ij}$ is now the $i,j$th element of the composition matrix defined in Section~\ref{sec_settings_estimation} but with vector $\boldsymbol{u}$ corresponding to the integration nodes, $w_{j}$ is the $j$th integration weight associated to the node $u_{j}$, $\varphi_{j}(\boldsymbol{\eta}) = \displaystyle \frac{\exp(\boldsymbol{B}(u_{j}) \boldsymbol{\eta})}{1 + \sum_{k = 2}^{m} \exp(\boldsymbol{B}(u_{k}) \boldsymbol{\eta})}$ with $\boldsymbol{B}$ a B-spline basis evaluated at $\boldsymbol{u}$ and $\boldsymbol{\eta}$ a set of spline coefficients to be estimated. We suggest to build the bases using a generous number of equally spaced knots \citep[see e.g.][]{Eilers2010}.

In our experience, the two strategies ensure very similar results. In Figure~\ref{fig_unifovm_vs_Gauss} we compare the price fitting accuracy of the uniform and Gauss-Legendre quadrature approaches using real data.    
\begin{figure}
\centering
\includegraphics[width = 1\linewidth]{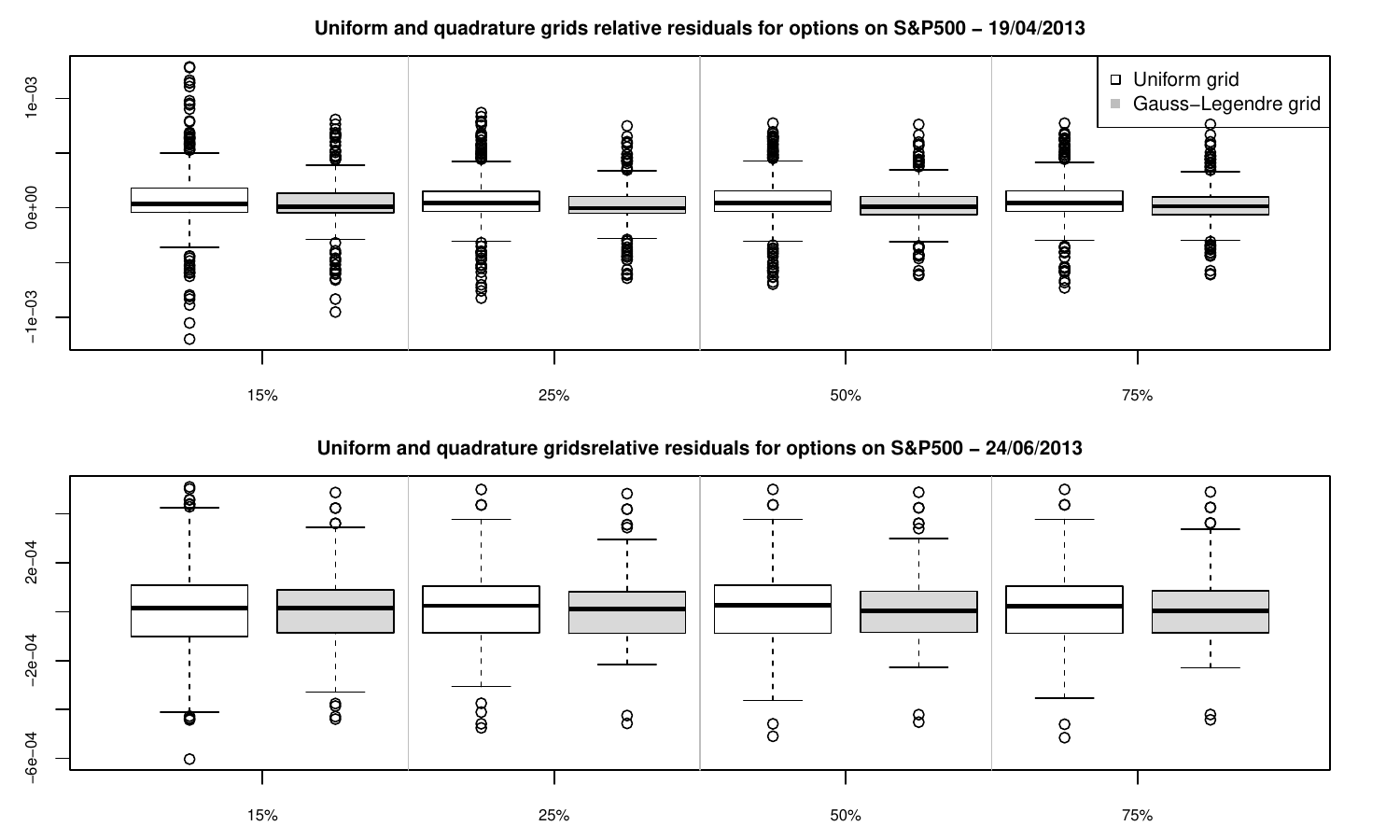}
\caption{Relative (to the underlying prices) residuals observed for uniform grids and  Gauss-Legendre integration scheme (B-splines defined over 40 equally spaced internal knots). We compare grids of $\{15\%, 25\%, 50\%, 75\% \}$ the sample size $n$. The data are available in the R-package \texttt{RND}. The relative (to the underlying price) differences between the results of the two integration strategies never exceed 0.5\%.}
\label{fig_unifovm_vs_Gauss}
\end{figure}

\section{No-arbitrage properties of the estimates}\label{sec_noarbitrage}
\begin{theorem}
The estimated option prices satisfy a set of no-arbitrage conditions (for brevity, the list below refers to call contracts only, \citealp[see e.g.][]{Harrison1981, Hardle2009}):

\begin{itemize}\itemsep1pt
  \item[1.] The estimated density is proper: $\hat{\varphi}_{j} \geq 0,\ \forall j=1,...,m$ and $\displaystyle \sum_{j = 1}^{m} \hat{\varphi}_{j} = 1$,
  \item[2.] The estimated prices are non-negative: $\hat{{c}}_{i} \geq 0,\ \forall i = 1,...,n$,
  \item[3.] The pricing function is monotone in the strikes:  $\displaystyle \frac{\partial \hat{{c}}_{i}}{\partial k_{i}} \leq 0,\ \forall i = 1,...,n$,
  \item[4.] The pricing function is convex in the strikes: $\displaystyle \frac{\partial^{2} \hat{{c}}_{i}}{\partial k^{2}_{i} } \geq 0,\ \forall i = 1,...,n$.
\end{itemize}
These conditions hold by definition for the DESPD estimator and do not need to be explicitly imposed in the estimation process.
\end{theorem}

\begin{proof}
Property 1 is satisfied since, by definition, 
$\hat{\varphi}_{j} = \displaystyle \frac{\exp(\hat{\eta}_{j})}{1+\displaystyle \sum_{k = 2}^{m} \exp(\hat{\eta}_{k})} \geq 0,\ \forall\ \hat{\eta}_{j} \in \mathds{R}^{m}$ and $\displaystyle \sum_{j = 1}^{m} \hat{\varphi_{j}}= 1$. This also proves condition 2 given the definition of the model matrix $\boldsymbol{G}$. In order to prove conditions 3 and 4 it is convenient to express the $i$th estimated call price as
	\[
	\hat{c}_{i} = \sum_{j = 1}^{m}\left[u_{j} - k_{i}\right]\mathtt{H}\left(u_{j} - k_{i}\right)\hat{\varphi}_{j},
	\]
where $\mathtt{H}\left(u_{j} - k_{i}\right)$ is a Heaviside step function with value one if $u_{j} \geq k_{i}$ and zero otherwise.  Define now the delta function $\delta \left(u_{j} - k_{i}\right)$ equal to zero for $u_{j} \neq k_{i}$. Computing the first and second derivatives (w.r.t. $k_{i}$) of the estimated option prices we obtain
	\[
	\begin{array}{llll}
	\displaystyle
	\displaystyle \frac{\partial \hat{c}_{i}}{\partial k_{i}} &= \displaystyle - \sum_{j = 1}^{m} \mathtt{H}\left(u_{j} - k_{i}\right) \hat{\varphi}_{j} - \sum_{j = 1}^{m} \left[u_{j} - k_{i}\right] \delta \left(u_{j} - k_{i}\right)\hat{\varphi}_{j} = \displaystyle - \sum_{j = 1}^{m} \mathtt{H}\left(u_{j} - k_{i}\right) \hat{\varphi}_{j}\leq 0,\\
	\displaystyle \frac{\partial^{2} \hat{c}_{i}}{\partial k^{2}_{i}} &= \displaystyle  \sum_{j = 1}^{m} \delta \left(u_{j} - k_{i}\right) \hat{\varphi}_{j} \geq 0. 
	\end{array}
	\]
By condition 1 and the definition of $\mathtt{H}\left(u_{j} - k_{i}\right)$, it also follows $\displaystyle \frac{\partial \hat{c}_{i}}{\partial k_{i}} \in \left[-1, 0 \right]$; note that this is also a necessary condition for arbitrage-free estimates, see e.g. \citealp*{Sahalia2003}. Conditions 1 and 2 clearly hold as well for put prices. Conditions 3 (with a positive sign now) and 4 can be proved analogously to call prices since
\[
\hat{p}_{i} = \sum_{j = 1}^{m}\left[k_{i} - u_{j}\right]\mathtt{H}\left(k_{i} - u_{j}\right)\hat{\varphi}_{j}.
\]
\end{proof}

Finally, it is required that the mean of the SPD is equal to the forward price.
This condition can be imposed by appropriately shifting the estimation support \citep[a similar strategy is discussed in][]{Hardle2009}. 
Once the vector $\hat{\boldsymbol{\varphi}}$ has been obtained we define the adjusted $\tilde{\boldsymbol{u}}$ vector as follows:
\[
 \displaystyle\tilde{\boldsymbol{u}} =  \displaystyle \boldsymbol{u} - \mathtt{I}_{\{+\}}^{(u)} \times \left(\boldsymbol{u}^{\top}\hat{\boldsymbol{\varphi}} - \exp(r(t-\tau))s_{t} \right) +
     \displaystyle \mathtt{I}_{\{-\}}^{(u)} \times \left(\exp(r(t-\tau))s_{t} - \boldsymbol{u}^{\top}\hat{\boldsymbol{\varphi}} \right), 
\]
where $\mathtt{I}_{\{-\}}^{(u)}$ is an indicator function equal to 1 if $\left(\boldsymbol{u}^{\top}\hat{\boldsymbol{\varphi}} < \exp(r(t-\tau))s_{t} \right)$ (and zero otherwise) and $\mathtt{I}_{\{+\}}^{(u)}$ is equal to 1 if $\left(\boldsymbol{u}^{\top}\hat{\boldsymbol{\varphi}} > \exp(r(t-\tau))s_{t} \right)$.

\section{Simulation analysis} 
\label{sec_simulations}
In this section we test our approach through a simulation study. We compare the DESPD model with the positive convolution approximation \citep[PCA,][]{Bondarenko2003}, the implied volatility smoothing of \citet{Fengler2009} and the nonlinear least squares procedure by \citet{Hardle2009}. The method of \citet[][]{Bondarenko2003} has already been found superior to the following alternatives:
i) a mixture of log-normal densities \citep[see e.g.][]{Bahra1996, Melick1997},
ii) Hermite polynomials/Edgeworth expansion \citep[see e.g.][]{Abken1996, Jarrow1989},
iii) regularization methods \citep[specifically][proposal]{Jackwerth1996},
iv) regularization of the implied volatility \citep*[see e.g.][]{Campa1998, Bliss2002, Jackwerth2000},
v) Modeling the SPD with sigma shaped polynomials \citep{Rosenberg1998}. We doe not compare them directly to our proposal.

Our evaluations are based on 1000 simulated European put options with maturity 21 days  \citep[in analogy with][]{Bondarenko2003}. We define sets of 7 and 23 regularly spaced strike prices $k_{i} = 430, 435, \dots, 540$ and assume that the underlying follows a mixture of three log-normal ($\mathcal{LN}$) densities,
	\[
	f^{*}(k) = \alpha_{1}\mathcal{LN}\left(k| \eta_{1}, \zeta_{1} \right) +
		 	    \alpha_{2}\mathcal{LN}\left(k| \eta_{2}, \zeta_{2} \right)+
		    	\alpha_{3}\mathcal{LN}\left(k| \eta_{3}, \zeta_{3} \right)
	\]
with parameters as in Table~\ref{tabLNpar}.

\begin{table}[htbp]
  \centering
  \caption{Simulation settings: parameters of the log-normal mixture}
    \begin{tabular}{lll}
    \hline
    $\alpha_{1} = 0.1194$ & $\eta_{1} = 475.59$ & $\zeta_{1} = 0.0550$ \\
    $\alpha_{2} = 0.8505$ & $\eta_{2} = 498.17$ & $\zeta_{2} = 0.0206$ \\
    $\alpha_{3} = 0.0301$ & $\eta_{3} = 524.91$ & $\zeta_{3} = 0.0146$ \\
    \hline
    \end{tabular}%
  \label{tabLNpar}%
\end{table}%

The put prices $\boldsymbol{p}^{*}$ have been contaminated by zero-mean uniform additive random noise  proportional to the bid-ask spread $\mathcal{S}$ (details can be found in \citealp*{Bondarenko2003}, Appendix B.2) with two levels of variability: $\epsilon \sim \mathcal{U}\left(-0.5\mathcal{S}, 0.5\mathcal{S} \right)$ and $0.5\epsilon$.

Our assessments are based on three performance indicators:
\begin{enumerate}
  \item  The goodness of fit of the recovered option prices to the simulated (noise-free) ones, as measured by the root Mean Squared Errors (RMSE) between the theoretical put prices ($\boldsymbol{p}^{*}$) and the estimated ones ($\hat{\boldsymbol{p}}$)
	\[
	\mbox{RMSE} = \sqrt{\sum_{i=1}^{n} \frac{\left(p_{i}^{*} - \hat{p}_{i}\right)^{2}}{n}}.
	\]
  \item The quality of the fit, as measured by the Root Integrated Squared Error (RISE) between theoretical ($f^{*}$) and estimated ($\hat{\varphi}$) SPDs
	 \begin{equation*}
	 \mbox{RISE} =\frac{1}{ \|f^{*}(x) \|} \sqrt{\int_{0}^{\infty}{ \left( f^{*}(x) - \hat{\varphi}(x) \right)^{2} \mbox{d} x} }.
	\end{equation*}
\item The accuracy of the inferred risk neutral moments measured by the absolute deviations between the estimated and the theoretical first four moments.	
\end{enumerate}

For each simulated data set, the optimal PCA bandwidth was selected by minimizing the root mean integrated error w.r.t. the simulated risk-neutral density (the same strategy is adopted in \citealp{Bondarenko2003}). However, this criterion is not feasible with real data when the SPD is unknown. In contrast, the DESPD results are based on a grid of 200 equally spaced $u$-points and the smoothing parameters were optimized as described in Section~\ref{subsec_penalty_sel}. The estimates were obtained using the results of Section~\ref{subsec:estimation_process} (the iterative weighted least squares reached convergence in less than 25 iteration on average for each simulation run).

Figures~\ref{fig_simulation_23k} and~\ref{fig_simulation_7k} summarize our results. 
The RMSE (middle upper and lower panels) between the theoretical and estimated prices, are moderate for the four competing methods (for both sample sizes). This highlights the good price fitting properties of the procedures. However, DESPD and PCA seem more accurate than the other two methods (this is evident in the simulation with 23 strikes). The root integrated squared errors between the estimated and the theoretical SPDs (leftmost upper and lower panels) quantify the accuracy achieved by the four estimators in estimating the latent distribution. In this case, DESPD shows excellent performance for the two sample sizes across the different noise levels. Finally, DESPD ensures the most accurate estimates of the (first four) risk neutral moments across all sample sizes and noise levels (rightmost upper and lower panels).

\begin{figure}
\centering
\includegraphics[width = 1\linewidth]{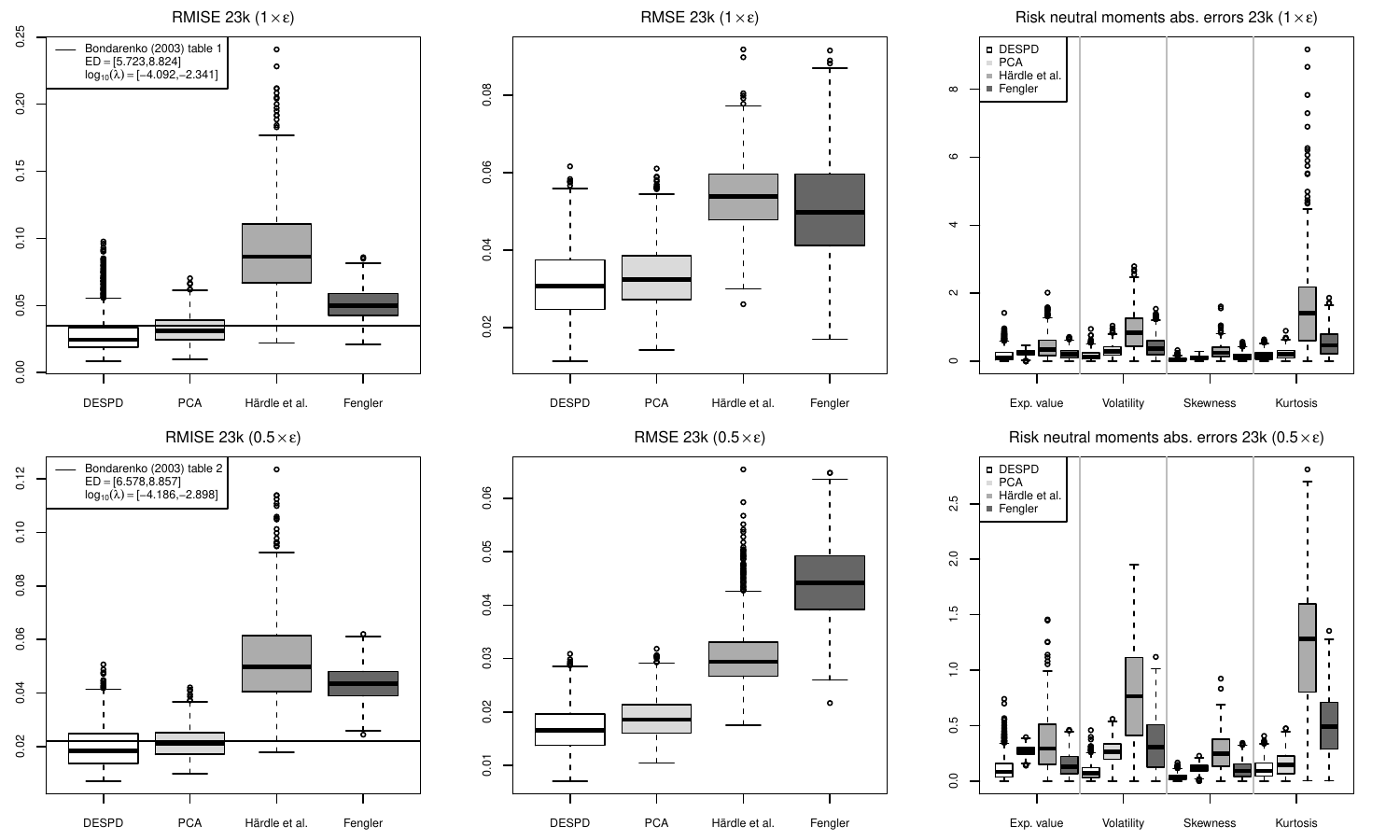}
\caption{Boxplots of RMISEs, RMSEs and absolute (first four) moments estimation errors computed over the simulations for the four estimators. These results refer to the prices simulated for 23 strikes. In the legends of the RMISE plots we indicate the ranges observed for the effective model dimension and the optimal $\log_{10}(\lambda)$ (see Section~\ref{subsec_penalty_sel}).}
\label{fig_simulation_23k}
\end{figure}
\begin{figure}
\centering
\includegraphics[width = 1\linewidth]{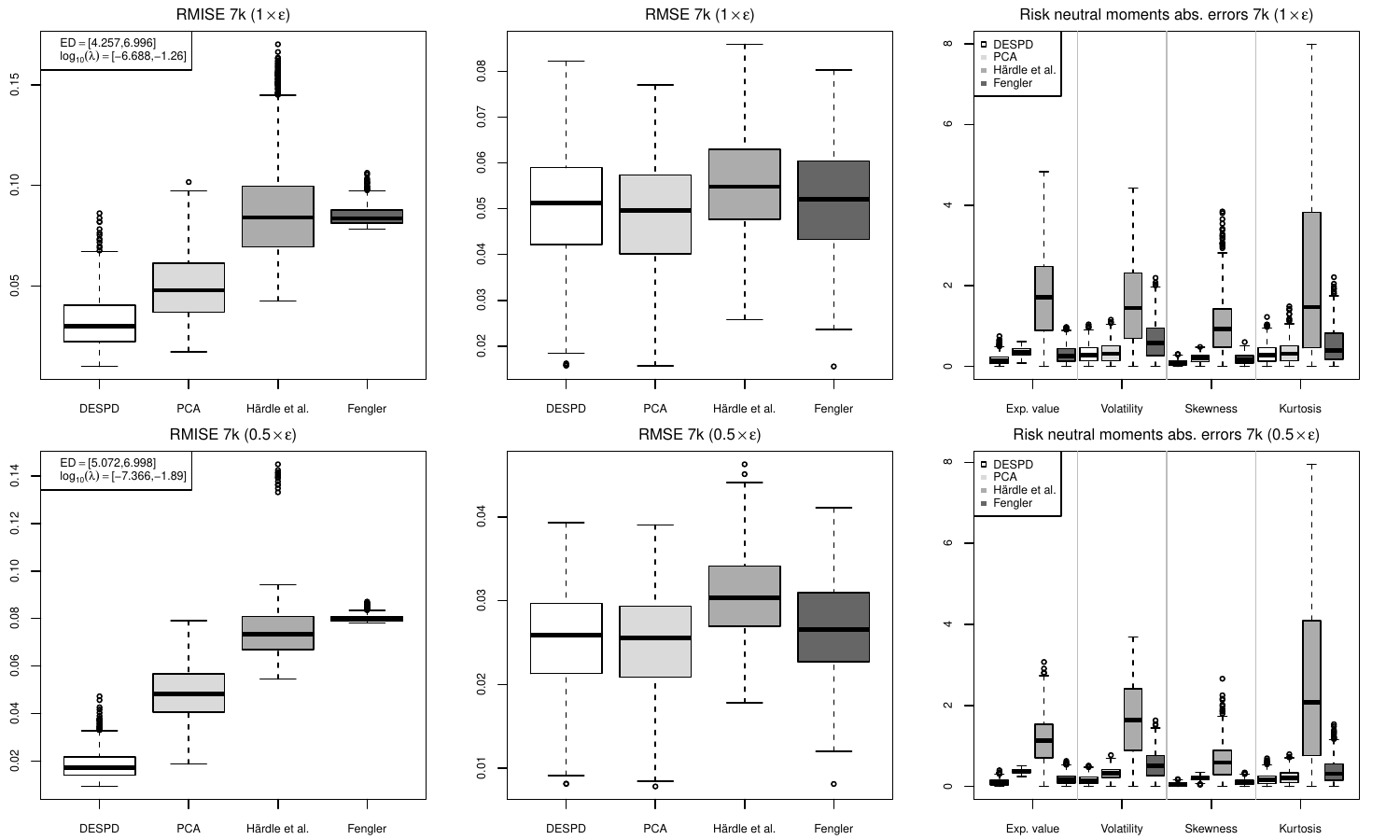}
\caption{Boxplots of RMISEs, RMSEs and absolute (first four) moments estimation errors computed over the simulations for the four estimators. These results refer to the prices simulated for 7 strikes. In the legends of the RMISE plots we indicate the ranges observed for the effective model dimension and the optimal $\log_{10}(\lambda)$ (see Section~\ref{subsec_penalty_sel}). }
\label{fig_simulation_7k}
\end{figure}

\subsection{Extrapolation and tail behavior}
\label{sec_extr_tail}
Over unobserved support points, DESPD density extrapolation is driven by the penalty term. The third order penalty implies an extrapolating polynomial of degree 2 for the logarithm of the density \citep*{Eilers2010}. 

In order to investigate extrapolation performance over unobserved support points, we produced a second simulation experiment. 
We generate two sets of options prices following a mixture of log-normal distributions with parameters as in Table~\ref{tabLNpar}. In analogy with Section~\ref{sec_simulations} we sample option prices for 23 strikes with two levels of noise. During the estimation process, we treat the strikes $k \in \{430, \dots, 460\}$ and $k \in \{525, \cdots, 540\}$ as unobserved. 

Our results are presented in Figure~\ref{fig_tails} which also includes the estimates obtained by extrapolating the tails of the inferred distributions using the parametric method in \citet{Figlewski2009}. For the optimization of the extreme value density parameters we follow \citet{Birru2012}. The DESPD estimates (gray and red lines) properly approximate the theoretical distribution (black dashed line). This holds across the observed and unobserved domain for both left and right tails. On the other hand, the right tails sampled from the optimized GEV densities (blue lines) tend to underestimate the real one while the approximation of the left tail seems slightly more appropriate.
\begin{figure}
\centering
\includegraphics[width = 1\linewidth]{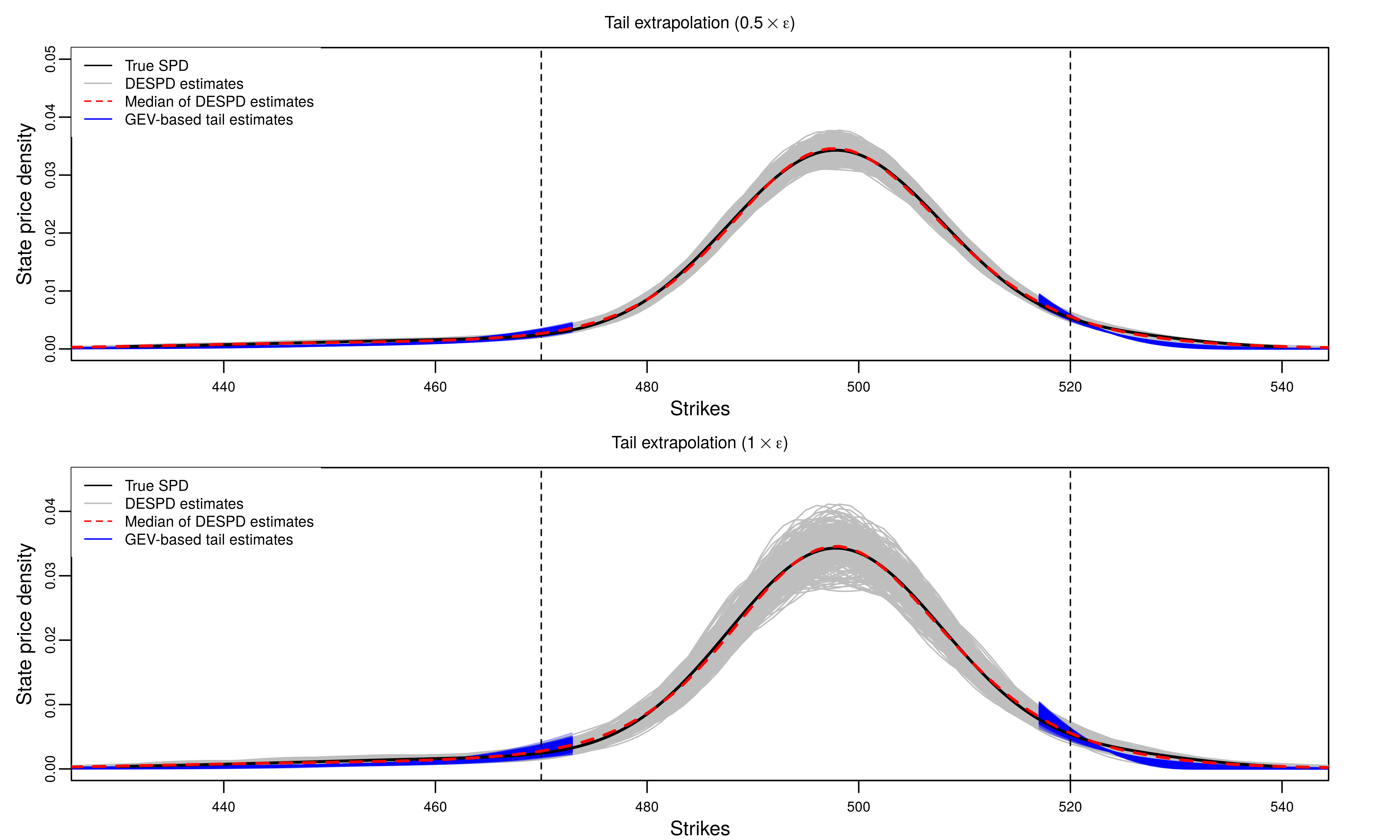}
\caption{Estimated and simulated SPDs for two different levels of price noise (23 strike prices).
The solid black line indicates the theoretical density, the dashed red lines
the median estimates along the simulation experiment and the solid gray lines the recovered densities obtained for each synthetic option prices set. The blue lines on the right and left tails represent the GEV-based estimates. For this simulation we considered the prices associated with $k \in \{430, \dots, 460\}$ and $k \in \{525, \cdots, 540\}$ as unobserved.}
\label{fig_tails}
\end{figure}

\section{Real data analysis} \label{sec_application}
We analyze a set of weekly S\&P500 call and put options (a.k.a. SPXW) traded between 02-Jan-2018 and 15-Feb-2019. SPXW options have been introduced in 2005 to provide a wider gamma of contracts for hedging and speculative purposes. These European-style contracts are settled at closure of their last trading day, which may be on a Monday, Wednesday or Friday. At a given date, the exchange lists six contracts with consecutive maturities (current expiration excluded). In this section we will focus on contracts expiring on Friday (also referred as end-of-week, EoW) only.

Figure~\ref{fig_3_mats} shows the estimates obtained for three contracts quoted on 01-Feb-2018 and expiring in one, two and three weeks. The inferred densities for each maturity appear left-skewed. The distribution of the standardized residuals (corrected for serial correlation as in Section~\ref{subsec:heteroscedasticity}) is bell-shaped and centered around 0 even if we observe slightly heavy tails especially for larger maturities (gray bars in the rightmost panels of Figure~\ref{fig_3_mats}). On the other hand, the distribution of the ``raw'' residuals (red bars in the rightmost panels of Figure~\ref{fig_3_mats}) appears skewed and with heavier tails. Finally, as expected, the volatility implied by the inferred SPD increases with the maturity: standard deviations are equal to 77.05, 79.47 and 87.67 respectively for the three contracts. 
\begin{figure}
\centering
\includegraphics[width = 1\linewidth]{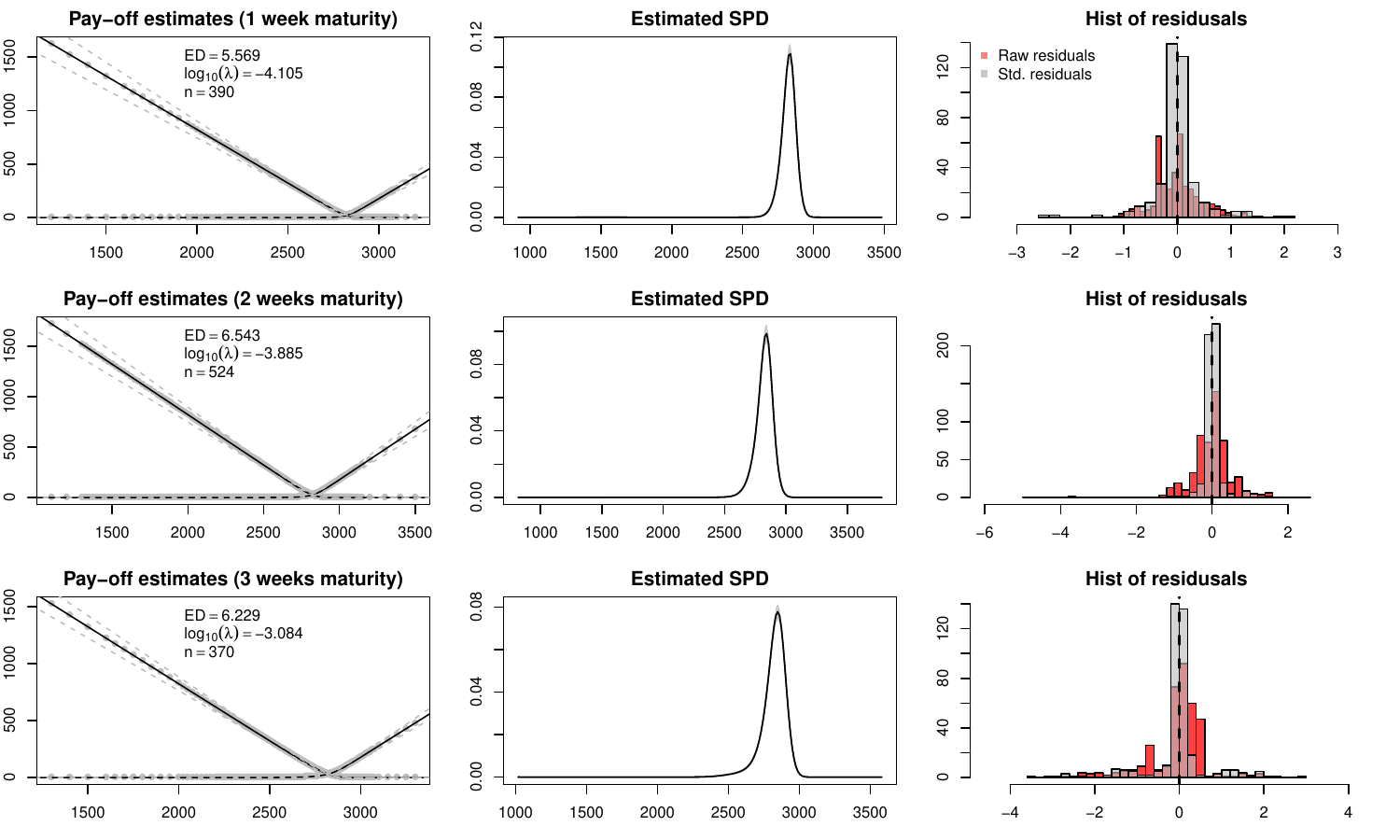}
\caption{DESPD estimates for the S\&P500 weekly option prices with maturity 1, 2 and 3
weeks. The option prices (gray dots) have been observed on February 1, 2018.}
\label{fig_3_mats}
\end{figure}

\subsection{Moments of the estimated SPD over time} 
 \label{subsec_moments}
The upper and lower left panels of Fig.~\ref{fig_spddyn} show the means and standard deviations of the SPDs estimated for the EoW-SPXWs quoted between 02-Jan-2018 and 15-Feb-2019. At each date, we used the contract with shortest maturity only. 

The first plot (top left) compares the forward underlying prices (dots) with the expected ones (lines). The expected values under the recovered SPDs follow really closely the index values. This is consistent with no-arbitrage requirements (see Sec.~\ref{sec_noarbitrage}). The solid black lines in the same graph indicate 5th and 95th percentiles under the inferred risk-neutral measures. 

The results presented in the bottom left panel summarize the standard deviations estimated at each date under the inferred SPD. As expected, the uncertainty about the value of the underlying asset at maturity decreases (almost linearly) when the expiration date (indicated by vertical dotted lines) approaches. \citet{Hardle2009} observed a similar behavior for DAX monthly options.  

The top plot in the right column of Fig.~\ref{fig_spddyn} displays the observed (gray dots) and expected (under the estimated SPD) index log-returns with standard deviations bands. The second order return moments were computed using a 5-days (one trading week) rolling window. 

The lower plot of the right column of the same figure compares the observed Cboe-VIX index (gray line) with the same index computed under the inferred risk-neutral density (black line). The VIX is a risk-neutral forward measure of market volatility calculated as the fair value of a 30-days variance swap rate (the index is computed using the model-free methodology to price a variance swap via static replication, see e.g. \citealp*[][]{Barletta2019}). The DESPD estimates were obtained by substituting the recovered implied volatilities in the formula described in~\cite{CBOE2019} technical report (page 9). The two series appear highly correlated with Pearson's $\rho$ equal to 0.987 (95\% CI: $[0.983, 0.989]$). However, the Cboe-VIX usually tends to  exhibit higher values (this is particularly evident around the spikes observed in February 2018 for example).
\begin{figure}
\centering
\includegraphics[width = 1\linewidth]{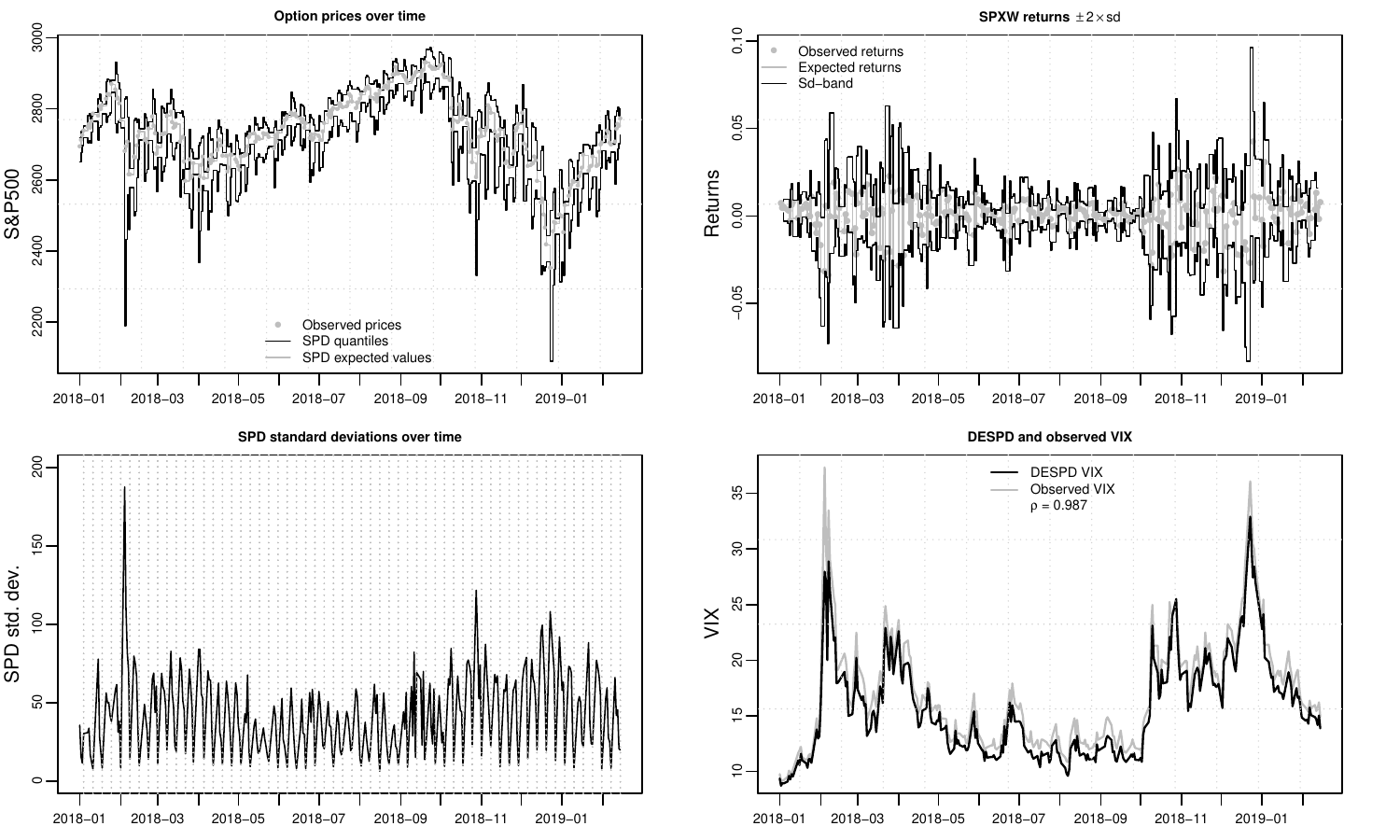}
\caption{Moments of the SPDs estimated for EoW-SPX options between 02-Jan-2018 and 15-Feb-2019. The upper panels report the time series of the index prices with expected (under the estimated densities) values and 95-percentile bands and the evolution of the implied standard deviations over time. The lower left graph summarizes the observed and expected index returns with standard deviation band. The lower right panel compares the observed VIX index with the one computed under the inferred risk neutral densities.}
\label{fig_spddyn}
\end{figure}

\subsection{Market expectations and forecast assessment}\label{sec_price_dyn}
The state price density summarizes the market expectations about the dynamics of the underlying asset under the risk-neutral probability measure. It is then interesting to evaluate if our estimates correctly synthesize such expectations and how those compare to the observed dynamics. 

We consider again end-of-week SPXW contracts. We estimate the SPD at first quotation day (58 contracts between 02-Jan-2018 and 15-Feb-2019). This gives us a one week horizon for assessing market's expectations about the movements of the S\&P500 index. From the densities inferred on each Thursday (issue date for contract expiring on next Friday), we compute the 5th and 95th percentiles and compare them with the realized index values. The upper panel of Figure~\ref{fig_mktExp} summarizes our results. The observed underlying dynamics appear consistent with our expectations. The S\&P500 prices stay within the corridor of the estimated order statistics. Only on 5 days the index reached values slightly above the 95th percentile. 
\begin{figure}
\centering
\includegraphics[width = 1\linewidth]{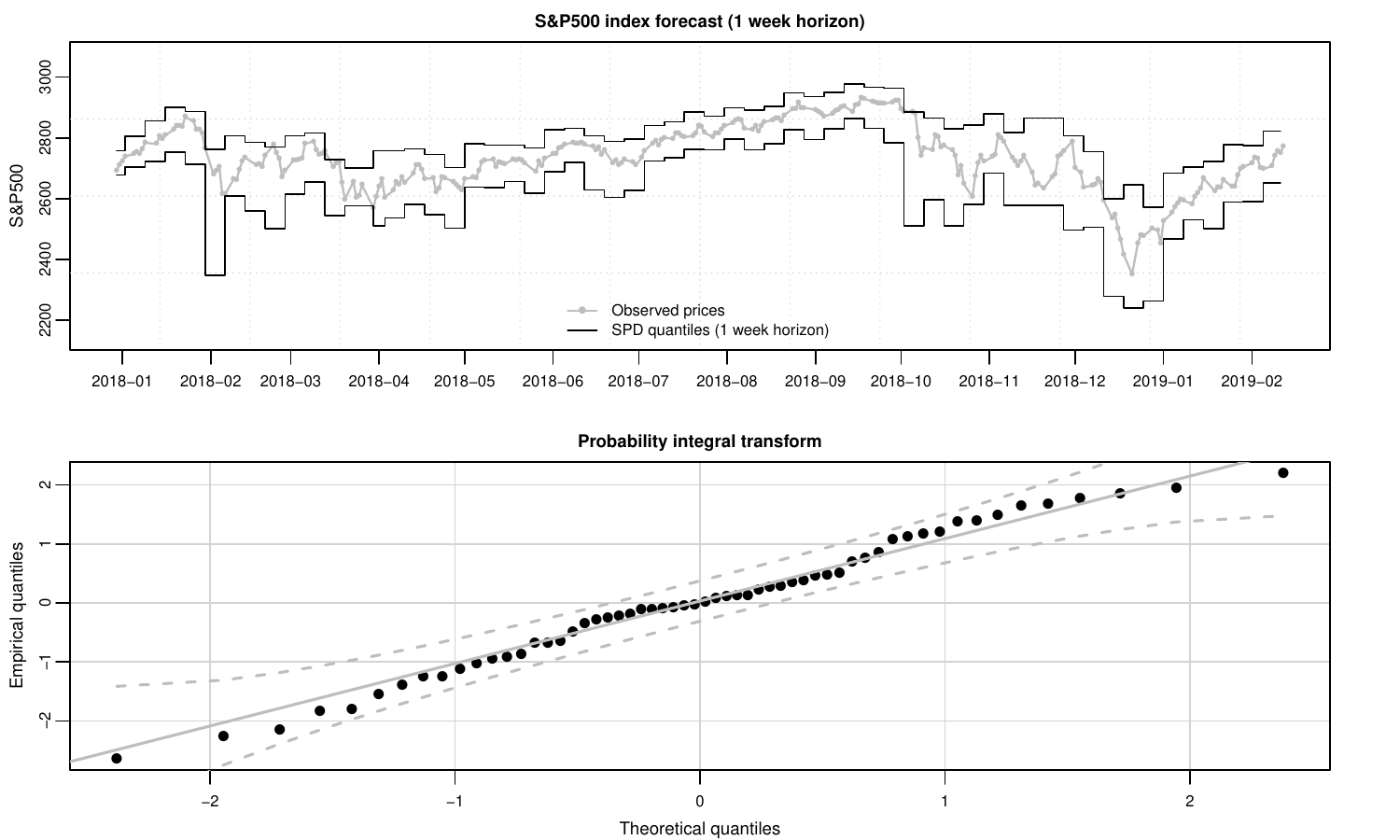}
\caption{Upper panel: one week horizon market expectations under the estimated SPDs and realized prices. The black lines indicate 5th and 95th percentiles of the risk neutral measure computed on Thursday of every week (58 quotation days). The gray line indicate the observed S\&P500 prices. The realizations laying outside the percentile range are indicated by square black dots. Lower panel: quantile-quantile plot of normalized probability integral transforms for the EoW-SPXW option contracts observed between 02-Jan-2018 and 15-Feb-2019. In the qq-plots, the empirical quantiles (black dots) are compared with the theoretical normal ones (solid  line). The confidence envelopes (dashed lines) are based on the standard errors of the order statistics of an independent random sample drawn from a standard normal distribution.}
\label{fig_mktExp}
\end{figure}

The quality of the SPD-based forecasts can be measured through probability integral transforms (PIT). Suppose that at time $t$ the density function $\hat{\boldsymbol{\varphi}}_{t}$ is available. Suppose also that we observe the spot price at maturity (end-of-week in our case) $s_{t+\Delta t}$. Then we can compute
\begin{equation*}
z_{t+\Delta t} = \int_{-\infty}^{s_{t+\Delta t}} \hat{\boldsymbol{\varphi}}_{t}(x) 
\mbox{d}x = \hat{\mathcal F}_{s}(s_{t+\Delta t}),
\end{equation*}
where $\mathcal{F}(\cdot)$ indicates the cumulative distribution function and $z_{t + \Delta t}$ is equal to the probability of $s_{t+\Delta t}$ under the density $\hat{\boldsymbol{\varphi}}_{t}$. If the estimated SPD well approximates the true density of $s_{t+\Delta t}$  then $z_{t + \Delta t} \sim U(0, 1)$ must hold, since
\begin{eqnarray*}
\mathcal{F}_{z}({z_{t+\Delta t}}) &=& \Pr(z \leq {z_{t+\Delta t}}) 
= \Pr\left(\hat{\mathcal F}_{s}({s_{t+\Delta t}}) \leq  z_{t+\Delta t} 
\right) 
= \Pr\left({s_{t+\Delta t}} \leq \hat{\mathcal F}_{s}^{-1}({z_{t+\Delta t}})
\right) \\
&=& \hat{\mathcal F}_{s}\left(\hat{\cal F}_{s}^{-1}({z_{t+\Delta t}}) \right)
= {z_{t+\Delta t}}
\end{eqnarray*}
We prefer to normalize the PIT measure \citep[see e.g.][]{Berkowitz2001} as $x_{t + \Delta t} = \Phi^{-1} (z_{t + \Delta t})$ (with $\Phi(\cdot)$ the standard Gaussian c.d.f.). Our assessments in Figure~\ref{fig_mktExp} (lower panel) confirm the appropriateness of the estimated SPDs in describing the expected underlying index dynamics at maturity. The empirical PIT quantiles are consistent with those of a standard normal distribution.  

As a final remark, we stress the fact that these results have been obtained under a strong assumption about the pricing kernel linking the risk-neutral and physical probability measures of the index realizations. The short maturity of the contracts we dealt with makes this assumption reasonable. Further support comes from the consistency between the inferred and realized price dynamics across a large number of contracts.

\section{Discussion} \label{sec_discussion}
We have introduced a new semi-parametric method, DESPD, for the direct estimation of the state price density (SPD) implied in option prices. Our framework is flexible and accurate, does not rely on any parametric assumption about the dynamics of the underlying asset and provides arbitrage-free pricing functions. It is a special case of the penalized composite link model \citep{Eilers2007} and is estimated through an intuitive iterative weighted least squares procedure.

The adjective ``direct" stresses the fact that we simultaneously estimate the unknown probability measure and the pricing function. This is different from many smoothing-based alternatives which approximate the observed prices as a smooth function and derive the unknown density by (numerical) differentiation (see e.g. \citealp*{Sahalia1998,Sahalia2000, Sahalia2003,Shimko1993,Bliss2002, Yatchew2006, Fengler2009}). 

Using P-splines, we model the logarithm of the SPD as a smooth function and match, in the least squares sense,  the expected values of the possible pay-offs at maturity with the observed prices. This guarantees the non-negativeness of the recovered density with no need for an explicit constraint or additional adjustments \citep[in contrast to, e.g.][]{Jackwerth1996}. 

The SPD estimation problem is ill-posed. Therefore, we regularize it by using a roughness penalty on the coefficients defining the unknown density. This ensures smooth estimates and efficient extrapolation to unobserved pay-offs (see results in Section~\ref{sec_extr_tail}). The optimal degree of smoothing can be selected through well established procedures (such as AIC, BIC or cross validation), but we use an efficient EM approach, inspired by mixed model theory \citep{Schall1991, Wood2017b}. In contrast, smoothness is not guaranteed by the approach of \citet{Hardle2009} while the selection of the optimal bandwidth is still an open issue for the positive convolution approximation of \citet{Bondarenko2003}. 
  
We tested our methodology using weekly index options. We simultaneously estimated  call and put contracts using the data augmentation strategy outlined in Section~\ref{sec_parity}. Our results confirm the appropriateness of the proposed framework. In particular, we observed a good degree of consistency between the inferred SPDs and the index dynamics realized during the residual life of each contract. 

We also tested DESPD performance in a large simulation study using option prices simulated from a 3-component log-normal mixture p.d.f. We compared DESPD with the positive convolution approximation \citep[PCA,][]{Bondarenko2003} and the methods of \cite{Hardle2009} and \cite{Fengler2009}. The PCA framework has been already compared to several alternatives belonging to different modeling classes. The evidence in Section~\ref{sec_simulations} is in favor of our methodology, both in terms of pricing accuracy and density estimation. 

Our future research will focus on generalizations of the present work. We will investigate the asymptotic properties of the proposed estimator. A good starting point should be the results presented in \citet*{Yu2002} and \citet*{Fengler2015}. Our framework can also be adopted within the parametric joint pricing kernel method of \citet[][section 5]{Song2016}. This would enable us to model the impact of multiple factors (e.g. price volatility) on the risk-neutral density. Furthermore, tensor product P-splines \citep{EilersMarx2003} make it possible to model two-dimensional SPDs. One could, for example, account for an intra-day time dimension to model option pricing movements during a trading day. The second dimension could also be the time to maturity \citep[in analogy, for example, with][]{Fengler2015}. This multidimensional DESPD would ensure efficient extrapolation of the SPDs over unobserved maturities and accurate estimation of the prices for longer time to maturity where the observations become more sparse. Finally, our framework does not include an explicit model for  stochastic volatility.  However, the estimated non-parametric density incorporates it implicitly. For explicit estimation of stochastic volatility, time series data would be needed.
\section*{Acknowledgments}
We thank professor Oleg Bondarenko for sharing the code used to estimate risk-neutral densities with the positive convolution approximation. We also thank the two anonymous reviewers and the associated editor for their insightful comments and remarks.
\newpage
\bibliographystyle{natbib}
\bibliography{References}

\end{document}